\newtheorem{theorem}{Theorem}
\begin{document}
\title{Federated Learning in MIMO Satellite Broadcast System}

\author{Raphael Pinard,~\IEEEmembership{Student Member,~IEEE,}, Mitra Hassani,
        Wayne Lemieux,~\IEEEmembership{Fellow,~IEEE,}}

%

\markboth{}%
{Shell \MakeLowercase{\textit{et al.}}: Bare Demo of IEEEtran.cls for IEEE Journals}

\maketitle

\begin{abstract}
Federated learning (FL) is a type of distributed machine learning at the wireless edge that preserves the privacy of clients' data from adversaries and even the central server. Existing federated learning approaches either use (i) secure multiparty computation (SMC) which is vulnerable to inference or (ii) differential privacy which may decrease the test accuracy given a large number of parties with relatively small amounts of data each. To tackle the problem with the existing methods in the literature, In this paper, we introduce incorporate federated learning in the inner-working of MIMO systems.
\end{abstract}

\begin{IEEEkeywords}
MIMO channel, Linear Antenna, Channel Capacity.
\end{IEEEkeywords}

\IEEEpeerreviewmaketitle

\section{Introduction}
\lettrine{C}{hannel} capacity is defined as the maximum rate at which data can be transmitted at an arbitrary small error probability \cite{gallager1968information}.
The capacity of a single-input–single-output (SISO) additive white Gaussian noise (AWGN) channel was first addressed by Shannon \cite{shannon1948mathematical}. With the invention of powerful space-time coding scheme \cite{tarokh1998space}, \cite{foschini1996layered}, the application of multiple-inputs multiple-output (MIMO) systems has been unraveled to  both academia and industry as mean of achieving channel capacity far beyond that of traditional techniques. This novel approach offers a unique
solution for increasing demand for high performance next generation wireless communications.
\\
In this paper, MIMO satellite communication (SatCom) is considered where the receiver is a linear array antenna. We aim to find the channel capacity for low Earth orbiting satellites whose positions are unknown to the terrestrial receiver. In low orbiting satellites, line of sight (LOS) becomes more dominant and the path loss reduces. Although its efficiency in a rich scattering environment has already been
demonstrated \cite{corazza2007digital}, less is studied under this new scenario. 
\\
The typical way to analyze channel capacity of MIMO systems is to find eigenvalue distribution for the channel matrix multiplied by its conjugate \cite{liang2005ergodic}. In our setup, when the channel is modeled as pure LoS, the channel matrix $H$ would become a Vandermonde matrix \cite{horn2012matrix}. However, for this setup, eigenvalue distribution is unknown and only there have been some studies over asymptotic behavior of it \cite{tucci2011eigenvalue,hamidi2022over,hamidi2019systems}. In \cite{tucci2011eigenvalue,struhsaker2020methods}, the author found a lower and upper bound for the maximum eigenvalue and presented the channel capacity for a sufficiently large matrices. 
In this paper, the average channel capacity and outage probability of such channels are analyzed, assuming the receiver has the perfect channel state information (CSI). This has been done by approximating the channel capacity and the accuracy of our method is justified with simulations.  
\\
The paper is organized as follows. We present the MIMO system model in Section \ref{Model}. Some basics to find the channel capacity of a MIMO system is reviewed in section \ref{sec:MIMO}. in section \ref{E(C)} the average channel capacity of MIMO SatCom systems, by finding moments of channel matrix multiplied by its conjugate, is presented. In \ref{sec:outage}, we discuss the outage capacity of MIMO SatCom systems. In section \ref{sec:max}, the optimum satellite arrangement is found in terms of maximum channel capacity.

\subsection{Definitions and Assumptions}
\begin{itemize}
\item{In this paper we use $x^{\star}$, $x^{\prime}$ and $x^{\dag}$ to show conjugate, transpose and conjugate transpose of a vector x, respectively. Also, $n_{R}$ and $n_{T}$ are the number of receivers and transmitters respectively. }
\item{A complex random vector $z$ of size $n\time 1$  is said to be Gaussian if the $2n\times 1$ real vector $\hat{x}$ consisting of its real and imaginary parts, i.e. \begin{equation*}
\hat{x} =
\begin{bmatrix} 
\Re(x) \\
\Im(x)
\end{bmatrix},
\end{equation*}
is Gaussian.}
\item{The expectation and covariance of $\hat{x}$ are defined as $\mathbf{E}[\hat{x}] \in \Re^{2n}$ and $\mathbf{E}[(\hat{x}-\mathbf{E(\hat{x})})(\hat{x}-\mathbf{E(\hat{x})})^\dag] \in \Re^{2n}$, respectively.}
\item{A complex Gaussian random vector x is circularly symmetric ~\cite{telatar1999capacity} if the covariance of the corresponding $\hat{x}$ has the structure
\begin{equation*}
\mathbf{E}[(\hat{x}-\mathbf{E(\hat{x})})(\hat{x}-\mathbf{E(\hat{x})})^\dag]=\frac{1}{2}
\begin{bmatrix} 
\Re(Q) & -Im(Q)\\
\Im(Q) & Re(Q)
\end{bmatrix},
\end{equation*}
for some Hermitian non-negative definite matrix $Q\in \mathbb{C}^{n\times n}$}
\end{itemize}

\section{MIMO System Model} \label{Model}
Let us consider a MIMO system in which the transmitters comprise $n_{T}$ satellites revolving around the Earth that their positions are unknown for the user located on the Earth. We further assume that all these satellites are in visibility range of the server and located in a spherical cap. This cap could be defined as $\phi \in (-\pi, \pi)$ and $\theta \in (0 ,\frac{\pi}{2})$, where $\phi$ and $\theta$ are azimuth and elevation angles in standard spherical coordinate. Also, the receiver is a linear antenna made of $n_{R}$ antenna elements.
\\
The transmitted signals in each symbol period are denoted by a $t\times 1$ complex vector x, where the $x_{i}$ refers to the transmitted signal from antenna $i$. The total power of the complex transmitted signal $x$ is constrained to $P$ regardless of the number of transmit antennas, meaning that: 
\begin{align}
\mathbf{E}[x^{\dag}x]=trace(\mathbf{E}[xx^{\dag}])=P 
\end{align} 
The transmitted signal bandwidth is narrow enough, so its
frequency response can be considered as flat. The received
signal $y$ is given by
\begin{align}
y=\mathbf{H}x+n
\end{align}
Where H is a $n_{R}\times n_{T}$ complex channel gain matrix. The $ij^{th}$ entry of the matrix $H$ represents the channel gain from the $j^{th}$ transmit to the $i^{th}$ receive antenna.
The noise at the receiver is denoted by the $r\times 1$ vector $n$. We make an assumption that components of $n$ are statistically independent complex zero mean Gaussian random variables with independent and equal variance of real and imaginary parts. The covariance matrix of n is given by
\begin{align}
\mathbf{E}[nn^{\dag}]=\sigma^2\mathbf{I}_r
\end{align}
Where $\sigma^2$ is the identical noise power at each of the receive antennas.
We assume that the total power per receive antenna is equal to the total transmitted power. In other worlds, signal attenuations and amplifications in the propagation process are ignored. We
also assume that H is perfectly known to the receiver, but not at the transmitter. The channel matrix can be estimate at
the receiver by transmitting a training sequence. The estimated
CSI can be communicated to the transmitter via a reliable
feedback channel.
\section{MIMO channel Capacity} \label{sec:MIMO}
For an evenly spaced antenna array along y-axis, array factor (AF) is (Appendix \ref{app:AF})
\begin{align} \label{eq:4}
AF(\theta, \phi)=\sum_{m=0}^{M-1} I_m e^{jkmdsin(\theta)sin(\phi)}
\end{align}
where $I_m$ is the gain injected to the $m^{th}$ antenna, $k=\frac{2\pi}{\lambda}$ and $\theta$ and $\phi$ are elevation and azimuth angles in standard spherical coordinate. In our setup we assume that all $I_m$'s$~=1$.
\\
On the other hand, each column of matrix H corresponds to the gain received from one satellite over the antenna array, and therefore in each column of H, there is only two random variables ($\theta$ and $\phi$) and the other entries are determined based on the AF formula. Mathematically speaking, H could be written as:
\\

$\mathbf{H}=\frac{1}{\sqrt{n_Tn_R}}\times
\\
\begin{pmatrix*}[c]
1 &...& 1 \\
e^{jkd\sin(\theta_1)\sin(\phi_1)} &. . .&  e^{jkd\sin(\theta_{n_T})\sin(\phi_{n_T})}\\
e^{jk2d\sin(\theta_1)\sin(\phi_1)} & . . .&  e^{jk2d\sin(\theta_{n_T})\sin(\phi_{n_T})}\\
.& ...& .\\
.& ...& .\\
.& ...& .\\
e^{jk (n_{R}-1)d\sin(\theta_1)\sin(\phi_1)} & . . .& e^{jk (n_{R}-1)d\sin(\theta_{n_T})\sin(\phi_{n_T})}\\
\end{pmatrix*}
$
\\
\\
where $\theta_i \sim U(0,\frac{\pi}{2})$ and $\phi_i \sim U(-\pi,\pi)$ to define a symmetric spherical cap above the terrestrial server. This configuration is depicted in Figure \ref{fig1}.
\\

For the case of study, where H is assumed to be known at the receiver and not the transmitter from~\cite{telatar1995capacity} we have: 
\begin{align}
C=\max\limits_{{p(x):tr(\mathbf{E}[xx^{\dag}])=P}} I (x;y,\mathbf{H})
\end{align}

The power received from each satellite is assumed to be equal to $P$. In~\cite{liang2005ergodic} the author simplified the above equation to obtain

\begin{align} \label{eq:6}
C=\mathbf{E}_H \Big\{  \log_2 \Big( det(I_{n_R} +\frac{P}{\sigma^2}HH^\dag \Big) \Big\}
\end{align}

\begin{align}  \label{eq:7}
=\mathbf{E}_H \Big\{  \log_2 \Big( det(I_{n_T} +\frac{P}{\sigma^2}H^\dag H \Big) \Big\}
\end{align}

Equation \eqref{eq:6} is used when $n_R<n_T$ and  \eqref{eq:7} is for when $n_T \leq n_R$. Also we define:  

\[ \mathbf{W}=
    \begin{dcases}  
	 H^\dag H & n_T\leq n_R 
	\\ HH^\dag & n_R<n_T
    \end{dcases}
\]
To find the expected capacity based on equation \eqref{eq:7}, it is common to write $HH^\dag$ in terms of its eigenvalues. This means that~\cite{liang2005ergodic}
\begin{align}  \label{eq:8}
C=\mathbf{E}_{\lambda} \Big\{\sum_{i=1}^{n_T}  \log_2 \big( 1+\frac{P}{\sigma^2} \lambda_i \big)\Big\}
\end{align}
Where $\lambda$'s are the eigenvalues of matrix $ \mathbf{W}$.
\\
On the other hand we know that $\lambda$ depends on $\theta$ and $\phi$ and therefor equation \eqref{eq:8} becomes:
\begin{align}  \label{eq:9}
C=\sum_{i=1}^{n_T} \mathbf{E} \Big\{  \log_2 { \big( 1+ \frac{P}{\sigma^2 } \lambda_i (\theta, \phi) \big)}  \Big\}
\end{align}
or equivalently 
\begin{align}  \label{eq:10}
C=\sum_{i=1}^{n_T} \int  \int \ \log_2 \big( 1+ \frac{P}{\sigma^2} \lambda_i \big) f(\theta,\phi) d\theta d\phi 
\end{align}
where $ f(\theta,\phi)$ is the joint probability distribution between random variables $\theta$ and $\phi$.

\section{Finding the average channel capacity} \label{E(C)}
It could be simply proven (see appendix \ref{th:lambda}) that the trace of a matrix is equal to the summation of its eigenvalues. Therefore, for random $N\times N $matrix $\mathbf{A}$ defined on probability space of $\zeta$ one can say
\begin{align}  \label{eq:11}
\mathbf{E} \Big\{Trace(\mathbf{A}) \Big\}=\mathbf{E}\sum_{i=1}^{N} \lambda_i=\sum_{i=1}^{N}\mathbf{E}\Big\{ \lambda_i \Big\}=\sum_{i=1}^{N} \int_{\zeta} \lambda_i f(\theta,\phi) d\theta d\phi
\end{align}
\begin{theorem} \label{th:lambda}
The trace of a matrix to its kth power is equal to the sum of its eigenvalues raised to the kth power.
\end{theorem}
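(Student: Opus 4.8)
The plan is to reduce everything to triangular form via the Schur decomposition and then exploit the similarity-invariance of the trace. First I would recall that every square matrix $\mathbf{A}\in\mathbb{C}^{N\times N}$ admits a factorization $\mathbf{A}=\mathbf{U}\mathbf{T}\mathbf{U}^{\dag}$ with $\mathbf{U}$ unitary and $\mathbf{T}$ upper triangular, whose diagonal entries $t_{11},\dots,t_{NN}$ are exactly the eigenvalues $\lambda_{1},\dots,\lambda_{N}$ of $\mathbf{A}$, listed with algebraic multiplicity. (If one prefers to avoid Schur, the Jordan normal form $\mathbf{A}=\mathbf{P}\mathbf{J}\mathbf{P}^{-1}$ serves the same purpose.)

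Second, I would observe that taking powers preserves this structure: $\mathbf{A}^{k}=\mathbf{U}\mathbf{T}^{k}\mathbf{U}^{\dag}$, and a product of upper triangular matrices is again upper triangular with diagonal entries equal to the products of the corresponding diagonal entries. Hence the diagonal of $\mathbf{T}^{k}$ is $(\lambda_{1}^{k},\dots,\lambda_{N}^{k})$; a one-line induction on $k$ makes precise this claim about the diagonal of a triangular matrix's power. Third, I would invoke the cyclic property of the trace, $\mathrm{Trace}(\mathbf{U}\mathbf{T}^{k}\mathbf{U}^{\dag})=\mathrm{Trace}(\mathbf{T}^{k}\mathbf{U}^{\dag}\mathbf{U})=\mathrm{Trace}(\mathbf{T}^{k})$, so that
\[
\mathrm{Trace}(\mathbf{A}^{k})=\mathrm{Trace}(\mathbf{T}^{k})=\sum_{i=1}^{N}\lambda_{i}^{k},
\]
which is the desired identity; the case $k=1$ recovers the statement quoted just before the theorem and justifies equation \eqref{eq:11}.

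I expect the only delicate point to be the bookkeeping of multiplicities: one must ensure the $\lambda_{i}^{k}$ on the right are counted with the same multiplicities as the eigenvalues of $\mathbf{A}$, even when distinct eigenvalues $\lambda_{i}\neq\lambda_{j}$ collide after raising to the $k$-th power (for instance $\lambda_{j}=-\lambda_{i}$ with $k$ even, or a root-of-unity ratio). Working with the full triangular factor $\mathbf{T}$ rather than with the set of \emph{distinct} eigenvalues handles this automatically, since each diagonal slot contributes its own term independently; this is precisely why I would favor the Schur/Jordan route over an argument phrased directly through the spectral mapping theorem, where the multiplicity count has to be argued separately.
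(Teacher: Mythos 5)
Your proof is correct: the Schur (or Jordan) factorization $\mathbf{A}=\mathbf{U}\mathbf{T}\mathbf{U}^{\dag}$, the observation that the diagonal of $\mathbf{T}^{k}$ consists of the $k$-th powers of the diagonal of $\mathbf{T}$, and the similarity-invariance of the trace together yield $\mathrm{Trace}(\mathbf{A}^{k})=\sum_{i=1}^{N}\lambda_{i}^{k}$, with eigenvalue multiplicities accounted for automatically, exactly as needed for equations \eqref{eq:11} and \eqref{eq:12}. The paper defers its own argument to an appendix that is not actually present in the source, so there is no proof to compare against; your argument is the standard one and would serve as the missing appendix.
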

\begin{proof}
See Appendix \ref{th:lambda}.
\end{proof}
Thus, similarly we have
\begin{align}  \label{eq:12}
\mathbf{E} \Big\{Trace(\mathbf{A^k}) \Big\}=\sum_{i=1}^{N} \int_{\zeta} \lambda_i^k f(\theta,\phi) d\theta d\phi
\end{align}

On the other hand, we can approximate the $\log$ term in equation \ref{eq:10} by using Taylor expansion \cite{hummel1949generalization} as follows
\begin{align}  \label{eq:13}
\log_2 \big( 1+ \frac{P}{\sigma^2} \lambda_i \big)=\frac{1}{\ln(2)} \sum_{k=1}^{\infty} \frac{(-1)^{k+1}  (\frac{P}{\sigma^2}\lambda_i)^{k}}{k}
\end{align}
Plugging this expansion is in equation \eqref{eq:10} yields to
\begin{align}  \label{eq:14}
C=\frac{1}{\ln(2)} \sum_{k=1}^{\infty}  \frac{(-1)^{k+1} P^k}{k~\sigma^{2k}} \sum_{i=1}^{n_T} \int \int \lambda_i^k f(\theta,\phi) d\theta d\phi 
\end{align}
Because $\mathbf{W}$ is a $N\times N$ matrix whose eigenvalues are $\lambda$'s, by using equation \eqref{eq:12} in the above equation we have

\begin{align}  \label{eq:15}
C=\frac{1}{\ln(2)} \sum_{k=1}^{\infty}  \frac{(-1)^{k+1} P^k}{k~\sigma^{2k}} \mathbf{E} \Big\{Trace(\mathbf{W^k}) \Big\}
\end{align}
Thus, one can find the average capacity for the channel defined in this setup by finding $ \mathbf{E} \Big\{Trace(\mathbf{W^k}) \Big\}$. 
\\
In finding average capacity by \eqref{eq:15}, infinite terms must be evaluated among which there is no relation. However, by simulation we show that only three terms of this sigma would give us the average capacity with a high accuracy. 
In the following parts, it is attempted to find the first three terms of the above summation, which indeed requires to evaluate $ \mathbf{E} \Big\{Trace(\mathbf{W^k}) \Big\}$ for $k=1,2,3$.
\\
Later in this section, we justify our approximation. 
\subsection{ Evaluating  the First Term, $ \mathbf{E} \Big\{Trace(\mathbf{W}) \Big\}$}
We denote the entry of matrix $\mathbf{W}$ located in $i^{th}$ row and $j^{th}$ column by $w_{ij}$. Also, we define 
\begin{align}  \label{eq:16}
\gamma_{ij}= \sin(\theta_j)\sin(\phi_j)-\sin(\theta_i)\sin(\phi_i)
\end{align}
By the definition of $\mathbf{W}$ we therefore have
\begin{align}  \label{eq:17}
w_{ij}=\frac{1}{n_Tn_R}\sum_{m=0}^{n_R-1} e^{jkmd\gamma_{ij}}
\end{align}
It is clear that $w_{ii}=\frac{n_R}{n_Rn_T}=\frac{1}{n_T}$ for all $i=1,2,...,n_T$ and so for the first moment of trace we have:
\begin{align}  \label{eq:18}
\mathbf{E} \Big\{Trace(\mathbf{W}) \Big\}=Trace(\mathbf{W})=n_T\times \frac{1}{n_T}=1
\end{align}

\subsection{ Evaluating  the Second Term, $ \mathbf{E} \Big\{Trace(\mathbf{W}^2) \Big\}$}
First, we write the trace of $\mathbf{W}^2$ in terms of its entries 
\begin{align}  \label{eq:19}
Trace(\mathbf{W}^2)=\sum_{i=1}^{N_T} \sum_{j=1}^{N_T} w_{ij}w_{ji}=\sum_{i=1}^{N_T} \sum_{j=1}^{N_T} w_{ij}w_{ij}^{\star} \nonumber \\
=n_R^2n_T+2\sum_{i=1}^{n_T} \sum_{i>j}^{n_T} w_{ij}w_{ij}^{\star}~~~~~~~~~~~~
\end{align}
Where the second last equality is due to the fact that $\mathbf{W}$ is Hermitian. To put away the cases in which $i=j$ -this would be useful in following calculation-, the last equation in \eqref{eq:19} was derived.

To better analyze the summation, we expand the $ w_{ij}w_{ij}^{\star}$ term as follows

\begin{equation*}
\begin{multlined}
w_{ij}w_{ij}^{\star}=\frac{1}{(n_Rn_T)^2}\Big(1+e^{-jkd\gamma_{ij}}+...+e^{-jk(n_R-1)d\gamma_{ij}}+
\\
...+e^{jkd\gamma_{ij}}+1+...+e^{jk(n_R-2)d\gamma_{ij}}+
\\
...+e^{jk(n_R-1)d\gamma_{ij}}+e^{jk(n_R-2)d\gamma_{ij}}+...+1\Big)
 \end{multlined}
\end{equation*}
It is seen that every term would be added to its conjugate, meaning that it could be simplified to $\cos$ terms. By putting away all $1$'s the following equation is derived
\begin{align}  \label{eq:20}
w_{ij}w_{ij}^{\star}=\frac{1}{(n_Rn_T)^2}\Big(n_R+2\sum_{s=1}^{n_R-1} (n_R-s)\cos(skd\gamma_{ij})\Big)
\end{align}
Now by plugging equation \eqref{eq:20} into \eqref{eq:19} we obtain
\begin{align}  \label{eq:21}
(n_Rn_T)^2 \times Trace(\mathbf{W}^2)= \nonumber ~~~~~~~~~~~~~~~~~~~~~~~~~~~~~~~~~~~~~\\
n_R^2n_T+2\sum_{i=1}^{n_T} \sum_{i>j}^{n_T} \Big( n_R+2\sum_{s=1}^{n_R-1} (n_R-s)\cos(skd\gamma_{ij}) \Big)&&
\\*
=n_R^2n_T+n_Rn_T(n_T-1)+\nonumber ~~~~~~~~~~~~~\\
4\sum_{i=1}^{n_T} \sum_{i>j}^{n_T}\sum_{s=1}^{n_R-1} (n_R-s)\cos(skd\gamma_{ij})~~~~~~~~~
\end{align}
Considering that $\gamma_{ij}$ is a function of $(\theta_i, \phi_i,\theta_j,\phi_j)$, to find the expected value of $Trace(\mathbf{W}^2)$ we have:
\begin{align}  \label{eq:23}
(n_Rn_T)^2 \times \mathbf{E} \Big\{Trace(\mathbf{W}^2) \Big\}=n_R^2n_T+n_Rn_T(n_T-1)+\nonumber ~~~~~~~~~~~~~\\
4\sum_{i=1}^{n_T} \sum_{i>j}^{n_T}\sum_{s=1}^{n_R-1} (n_R-s) \int_\zeta \cos(skd\gamma_{ij})~~~~~~~~~~~~~
\end{align}
where $\zeta$ is the probability space.
\\
Henceforth, we get down to finding $\int_\zeta \cos(skd\gamma_{ij})$ by substituting back the original value for $\gamma_{ij}$. Also,  
\begin{align}  \label{eq:24}
\cos(skd( \sin(\theta_j)\sin(\phi_j)-\sin(\theta_i)\sin(\phi_i)))= \nonumber \\
\cos(skd(\sin(\theta_j)\sin(\phi_j)))\cos(skd(\sin(\theta_i)\sin(\phi_i)))- \nonumber \\
\sin(skd(\sin(\theta_j)\sin(\phi_j)))\sin(skd(\sin(\theta_i)\sin(\phi_i)))
\end{align}

Meanwhile, we recall that $\theta_i$ and $\theta_j$  $\sim U(0,\frac{\pi}{2})$, and $\phi_i$ and $\phi_j$  $\sim U(-\pi,\pi)$, all of which are independent random variables; thus

\begin{align}  \label{eq:25}
\int_\zeta \cos(skd\gamma_{ij})=~~~~~~~~~~~~~~~~~~~~~~~~~~~~~~~~ \nonumber \\
(\frac{1}{\frac{\pi}{2}}\frac{1}{2\pi})^2\Big(\int_{0}^{\frac{\pi}{2}}\int_{-\pi}^{\pi} \cos(skd(\sin(\theta_j)\sin(\phi_j))) d(\phi_j)d(\theta_j) \times  ~~~~~~ \nonumber \\ 
\int_{0}^{\frac{\pi}{2}}\int_{-\pi}^{\pi}\cos(skd(\sin(\theta_i)\sin(\phi_i)))d(\phi_i)d(\theta_i))\Big) - ~~~~~~~~~~\nonumber \\ 
(\frac{1}{\frac{\pi}{2}}\frac{1}{2\pi})^2\Big(\int_{0}^{\frac{\pi}{2}}\int_{-\pi}^{\pi} \sin(skd(\sin(\theta_j)\sin(\phi_j))))d(\phi_j)d(\theta_j)\times  ~~~~~~\nonumber \\ 
\int_{0}^{\frac{\pi}{2}}\int_{-\pi}^{\pi}\sin(skd(\sin(\theta_i)\sin(\phi_i)))) d(\phi_i)d(\theta_i) \Big) = ~~~~~~~~~ \\ \label{eq:26}
(\frac{1}{\frac{\pi}{2}}\frac{1}{2\pi})^2\Big(\int_{0}^{\frac{\pi}{2}}\int_{-\pi}^{\pi} \cos(skd(\sin(\theta_j)\sin(\phi_j))))d(\phi_j)d(\theta_j)\times  ~~~~~~ \nonumber \\ 
\int_{0}^{\frac{\pi}{2}}\int_{-\pi}^{\pi}\cos(skd(\sin(\theta_i)\sin(\phi_i)))) d(\phi_i)d(\theta_i)\Big)= ~~~~~~~~~\\
(\frac{1}{\frac{\pi}{2}})^2\Big(\int_{0}^{\frac{\pi}{2}}J_0(skd\sin(\theta_j))d(\theta_j) \times \int_{0}^{\frac{\pi}{2}}J_0(skd\sin(\theta_i))d(\theta_i) \Big)~~~~ \label{eq:27}
\end{align}
Where equation \eqref{eq:26} is hold because $\sin$ is an odd function and the inner integral is symmetric from $-\pi$ to $\pi$. 
\\
In equation \eqref{eq:27} $J_0(.)$ is the Bessel function of the first kind. Also because $\theta_i$ and $\theta_j$ have the same distribution, we can further simplify it to
\begin{align}  \label{eq:28}
\int_\zeta \cos(skd\gamma_{ij})=(\frac{1}{\frac{\pi}{2}})^2 \Big(\int_{0}^{\frac{\pi}{2}}J_0(skd\sin(\theta_j))d(\theta_j)\Big)^2 \nonumber \\
=(\frac{1}{\frac{\pi}{2}})^2 \Big( \int_{0}^{1} \frac{J_0(skd\mu)}{\sqrt{1-\mu^2}}d\mu\Big)^2~~~~~~~~~~~~~
\end{align}
Where the last equation is found by putting $\sin(\theta_j)=\mu$.
From \cite{prudnikov1986integrals}, part 2.12.21, we can find the above integral as follows:
\begin{align}  \label{eq:29}
=(\frac{1}{\frac{\pi}{2}})^2 \Big(\int_{0}^{1} \frac{J_0(skd\mu)}{\sqrt{1-\mu^2}}d\mu \Big)^2 \nonumber \\ 
(\frac{1}{\frac{\pi}{2}})^2 \Big(\frac{\pi}{2} J_0  \big( \frac{skd}{2} \big)^2\Big)^2  \nonumber \\ 
=J_0  \big( \frac{skd}{2} \big)^4
\end{align}
Therefore, the equation \eqref{eq:23} becomes
\begin{align} \label{eq:30}
\mathbf{E} \Big\{Trace(\mathbf{W}^2) \Big\}=\frac{1}{(n_Rn_T)^2}\Big(n_R^2n_T+n_Rn_T(n_T-1)+\nonumber ~~~~~~~~~~~~~\\
2\sum_{s=1}^{n_R-1}(n_R-s)n_T(n_T-1) J_0  \big( \frac{skd}{2} \big)^4\Big)~~~~~~~~~~~~~
\end{align}

\subsection{ Evaluating  the Third Term, $\mathbf{E} \Big\{Trace(\mathbf{W}^3) \Big\} $}
The formulas written in this part, because of mathematical complexity, are briefly mentioned without proof. One can simply prove them all by taking the same steps as the previous part.
\\
The formula to find $ Trace(\mathbf{W}^3) $ is as follows
\begin{align} \label{eq:31}
Trace(\mathbf{W}^3)=\frac{n_R^3n_T}{(n_Rn_T)^3}+ 6\sum_{i=1}^{n_T} \sum_{j>i}^{n_T}w_{ij} \sum_{s=i}^{n_T}w_{si}w_{js}+ \nonumber \\
\sum_{i=1}^{n_T} \sum_{j\neq i}^{n_T}w_{ij} \sum_{s\neq i,j}^{n_T}w_{si}w_{js}~~~~~~~~~~~~~~
\end{align}
and thus, by finding the expected value of the above equation with the same approach as the expected value for $Trace(\mathbf{W}^2)$, one can prove that

\begin{flalign}
\mathbf{E} \Big\{Trace(\mathbf{W}^3) \Big\}= \frac{1}{(n_Rn_T)^3} \Bigg(n_R^3nT+3n_R^2n_T(n_T-1)~~~~~~ \nonumber \\
+n_Rn_T(n_T-1)(n_T-2)~~~~~~ ~~~~~~ \nonumber \\
+6\sum_{s=1}^{n_R-1}n_R(n_R-s)n_T(n_T-1) J_{0} \big( \frac{skd}{2} \big)^2~~~~~~ ~~~~~~ \nonumber \\+6\sum_{s=1}^{n_R-1}(n_T-2)\prod_{k=0}^2(k+s)J_{0} \big( \frac{(n_R-s)kd}{2} \big)^2~~~~~ ~~~\nonumber \\
+6\sum_{s=1}^{\frac{n_R-1}{2}}(n_T-2)\prod_{k=0}^2(n_R-2s+k)J_{0} \big( \frac{skd}{2} \big)^2 J_{0} \big( {skd} \big)~~~~~ ~~~\nonumber \\
+\Big(6\sum_{s=1}^{n_R-1}\sum_{t=1}^{n_R-2s} (n_R-2m)^{+}(n_T-2) \times~~~~~ ~~~ \nonumber \\
\prod_{k=0}^2 (2n_R-2t-2s+2+k) \times ~~~~~ ~~~\nonumber \\
J_{0} \big( \frac{skd}{2} \big) J_{0} \big( \frac{(s+t)kd}{2} \big)J_{0} \big( \frac{(2s+t)kd}{2} \big) \Big)\Bigg)~~~~ ~~~~~
\end{flalign}

In the following figure, the channel capacity for a setup with 64 receive antenna and 64 satellite is simulated. The average capacity is found by generating 100000 random matrices. The results show the accuracy of first-three-term approximation.
\begin{figure}[htbp]
\centering{\includegraphics[scale=0.5]{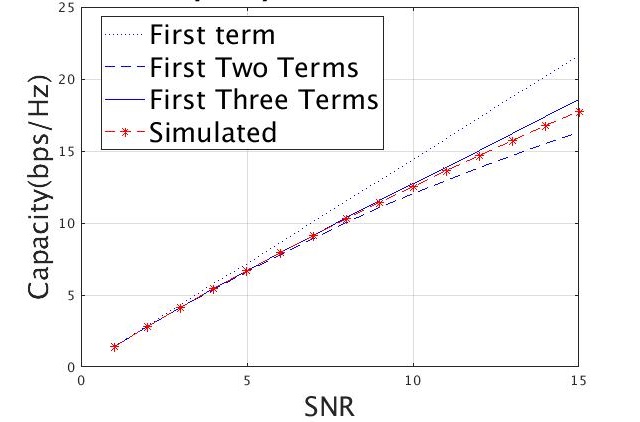}}
    \caption{Channel Capacity for $64\times 64$ MIMO channel.}
    \label{CDD}
\end{figure}

\bibliographystyle{IEEEtran}
\bibliography{refs}
\end{document}